\numberwithin{equation}{section}
\theoremstyle{plain}                
\newtheorem{theorem}{Theorem}[section]
\newtheorem{lemma}[theorem]{Lemma}
\newtheorem{proposition}[theorem]{Proposition}
\theoremstyle{definition}           
\newtheorem{definition}[theorem]{Definition}
\theoremstyle{remark}
\newcommand{\sgn}{\textrm{sgn}}
\newcommand{\hvpz}{\hat{\varphi}^0}
\newcommand{\hvp}{\hat{\varphi}}
\newcommand{\ox}{\overline{x}}    \newcommand{\ux}{\underline{x}}
    \newcommand{\uy}{\underline{y}}
\newcommand{\hx}{\hat{x}}
\newcommand{\ga}{g_{\alpha}}
\newcommand{\ba}{\beta_{\alpha}}
\newcommand{\bC}{\mathbb{C}}
\newcommand{\tT}{\tilde{T}}
\newcommand{\tH}{\tilde{H}}
\newcommand{\tg}{\tilde{g}}
\newcommand{\abs}[1]{\left| #1 \right|}
\newtheorem{rem}{\scshape\bf Remark}[section]
\begin{document}

\markboth{Jin Hyuk Choi}{Stochastics: An International Journal of Probability and Stochastic Processes}

\title[Asymptotic analysis for Merton's problem with transaction costs]{Asymptotic analysis for Merton's problem with transaction costs in power utility case}

\author{Jin Hyuk Choi  }

\thanks{Email: jinhyuk@andrew.cmu.edu \\ Department of Mathematical Sciences, Carnegie Mellon University, Pittsburgh, USA}

\maketitle
\begin{abstract}
We revisit the optimal investment and consumption problem with proportional transaction costs. We prove that both the value function and the slopes of the lines demarcating the no-trading region are analytic functions of cube root of the transaction cost parameter. Also, we can explicitly calculate the coefficients of the fractional power series expansions of the value function and the no-trading region. 
\end{abstract}


\section{Introduction}

In mathematical finance, the optimal investment and consumption problem has been intensively studied by many researchers since the seminal work of Merton \cite{Mert:69,Mert:71}. In \cite{Mert:69,Mert:71}, the author showed that, for power or logarithmic utilities, it is optimal to keep a constant proportion (the Merton proportion) of the wealth invested in a risky asset. 

As a generalization of \cite{Mert:69,Mert:71}, Constantinides and Magill \cite{ConMag:76} considered the model in which proportional transaction costs ($\lambda$ times transaction amounts) are imposed for each transaction. They intuited that the optimal trading strategy for power and logarithmic utilities can be described in the following way: the agent should minimally trade in such a way that the agent's proportion of wealth invested in a risky asset should be located on some interval $[\underline{\pi},\overline{\pi}]$. Davis and Norman \cite{DavNor90} proved this intuition by formulating the problem as a singular stochastic control problem. Shreve and Soner \cite{ShrSon94} subsequently extended analysis of \cite{DavNor90} by removing various technical conditions assumed in \cite{DavNor90}. 

Since an explicit formula for the solution is unknown, except the case of $\lambda=0$ (no transaction costs case), the asymptotic analysis around $\lambda=0$ for the value function and no-trading region ($\underline{\pi}$, $\overline{\pi}$) has been studied. In \cite{ShrSon94}, they showed that the effect of transaction cost on the value function (on $\underline{\pi}$ and $\overline{\pi}$, resp.) is of order $\lambda^{\frac{2}{3}}$ ($\lambda^{\frac{1}{3}}$, resp.). In this model, Jane{\v{c}}ek and Shreve \cite{JanShr04} determined the coefficients of the $\lambda^{\frac{2}{3}}$ term  ($\lambda^{\frac{1}{3}}$ term, resp.) for the expansion of the value function ($\underline{\pi}$ and $\overline{\pi}$, resp.), by constructing sub- and supersolutions of the Hamilton-Jacobi-Bellman (HJB) equation for the value function.

More recently, Kallsen and Muhle-Karbe \cite{KalMuh10} employed the concept of the shadow price for the analysis of the problem above. Conceptually, the shadow price is a risky asset price process which lies between the bid and ask price of the original market, and the frictionless market with shadow price and the original market with transaction costs lead to same value function. In \cite{KalMuh10}, the authors constructed the shadow price for the case of the logarithmic utility, under the assumption that the Merton proportion $\pi$ is less than 1. The case of power utility was considered in Herczegh and Prokaj \cite{HP2011}, and the authors showed that there exists a shadow price under the assumption that the no-trading region is inside the first quadrant of $\mathbb{R}^2$. In parallel with \cite{HP2011}, Choi, S{\^\i}rbu and {\v Z}itkovi\'c \cite{CMZ12} showed that the shadow price can be constructed for power and logarithmic utilities without assumptions in \cite{KalMuh10} and \cite{HP2011}, whenever the original problem is well-posed, i.e., the value function is finite. \cite{CMZ12} also provided explicit characterization of the finiteness of the value function, which was the only remaining technical assumption in the analysis of \cite{ShrSon94}. Gerhold, Muhle-Karbe and Schachermayer \cite{GMS11} used the shadow price approach to obtain the results of asymptotic expansion up to an arbitrary order for logarithmic utility, under the assumption $\pi<1$.

In this paper, we show that the value function and the slopes ($\underline{\pi}$, $\overline{\pi}$) of the lines which determine the no-trading wedge are analytic functions of $\lambda^{\frac{1}{3}}$ at $\lambda=0$, in case of the power utility. Furthermore, the coefficients of power series expansions of these analytic functions can be obtained up to an arbitrary order by recursive calculations. This result extends the work of \cite{GMS11} to the power utility case, without the technical condition $\pi<1$ imposed in \cite{GMS11}. Also, we confirm that the assumption in \cite[Section 3]{JanShr04} is true: $\underline{\pi}$, $\overline{\pi}$ and the value function  have expansions in powers of $\lambda^{\frac{1}{3}}$. 

The paper is organized as follows. In Section \ref{model}, we introduce Merton's problem with transaction costs, and as a preliminary, provide a proposition which is a modified version of \cite[Theorem 2.8]{CMZ12}. In Section \ref{analytic}, we prove that the solution of the free-boundary problem is analytically dependent on $\lambda^{\frac{1}{3}}$. By using the result of Section \ref{analytic}, in Section \ref{series}, we prove the main result of this paper: the value function $u$ and the slopes of the lines demarcating the no-trading region can be written as power series of $\lambda^{\frac{1}{3}}$, and the coefficients can be explicitly calculated. As an example, we provide the first several coefficients of power series expansions. Since the analysis is based on the results of \cite{CMZ12}, we present some of them in Appendix.

\section{Merton's Problem with Transaction Costs}\label{model}
The model of a financial market is the same as that of \cite{DavNor90, ShrSon94, JanShr04, GMS11, CMZ12}.
The stock (risky asset) price process $\{S_t\}_{t\geq 0}$ is given by
\begin{equation}
dS_t =S_t(\mu dt + \sigma dB_t),\,  t\geq 0 \textrm{  with  } S_0>0,
\end{equation}
where $\{B_t\}_{t\geq 0}$ is the standard Brownian motion, and $\mu>0$ and $\sigma>0$ are constants. 
An agent initially has $\eta_B>0$ units of bonds with a constant value 1, and has $\eta_S>0$ shares of stocks. The agent needs to pay transaction costs proportional to the amount of stock sold, which means that bid and ask price of stock are $(1-\lambda)S_t$ and $S_t$, respectively\footnote{As discussed in \cite{GMS11}, by scaling the stock price, we can easily see that this formulation is equivalent to the model with bid and ask price of $(1-\underline{\lambda})S_t$ and $(1+\overline{\lambda})S_t$, for constants $\underline{\lambda}\in (0,1)$ and  $\overline{\lambda}>0$.}.

We describe the agent's {\bf investment/consumption strategy} with a triple $(\varphi^0,\varphi,c)$ of optional processes (with respect to the natural augmentation of the filtration generated by B). $\varphi^0$ and $\varphi$ are right-continuous and finite variation, and they represent number of shares of bond and number of shares of stock, respectively. The process $c$ is nonnegative and locally integrable, and it represents the consumption rate. 
We assume that the initial value $(\varphi^0_{0-},\varphi_{0-})$ equals $(\eta_B,\eta_S)$. Note that $(\varphi^0_{0},\varphi_{0})$ may differ from $(\varphi^0_{0-},\varphi_{0-})$ because of the initial transaction at time zero.

For $p\in (-\infty,1)\setminus\{0\}$, we consider the utility function $U:[0,\infty)\mapsto [-\infty,\infty)$ defined by
$$U(x)=\tfrac{1}{p}x^p \textrm{  for  } x>0, 
\textrm{  and  } U(0)=\left \{ \begin{array}{ll} 0, &p>0,\\ -\infty, &p<0.\end{array} \right.$$

Now we define admissible strategies and the utility maximization problem considered in this paper.

\begin{definition}{\textit{(Admissible strategy, optimal strategy, value function)}}
We call an investment/consumption strategy $(\varphi^0,\varphi,c)$ {\bf admissible} if it satisfies self-financing and solvency conditions, i.e.,
\begin{equation}
\begin{split}
\varphi_t^0 = \varphi_{0-}^{0} - \int_0^t S_u d\varphi_u^{\uparrow} + \int_0^t (1-\lambda)S_u d\varphi_u^{\downarrow} - \int_0^t c_u du, \quad &\textrm{(self-financing)}\\
\varphi_t^0 + (\varphi_t)^+ S_t - (\varphi_t)^- (1-\lambda)S_t \geq 0, \quad &\textrm{(solvency)}
\end{split}
\end{equation}
where $\varphi=\varphi_{0-}+\varphi^{\uparrow}-\varphi^{\downarrow}$ is the pathwise minimal (Hahn-Jordan) decomposition of $\varphi$ into a difference of two non-decreasing adapted, right-continuous processes.

We consider the following optimal-consumption problem, with the value
\begin{equation}
\begin{split}\label{primal}
u(\eta_S,\eta_B) = \sup_{\textrm{admissible $(\varphi^0,\varphi,c)$}} \mathbb{E}\Big[\int_0^{\infty} e^{-\delta t} U(c_t)dt\Big],
\end{split}
\end{equation}
where the constant $\delta>0$ stands for the impatience rate. We call the admissible maximizer of the problem the {\bf optimal strategy} and denote it by  $(\hat{\varphi}^0,\hat{\varphi},\hat{c})$.
\end{definition}

From the result of \cite{ShrSon94}, there exist two constants, namely $\underline{\pi}$ and $\overline{\pi}$ (depending on market parameters), such that the optimal trading strategy $(\hat{\varphi}^0,\hat{\varphi})$ is to minimally trade in such a way that 
$$\underline{\pi}\leq\frac{\hvp_t S_t}{\hvp_t^0 + \hvp_t S_t}\leq \overline{\pi}$$
holds. In other words, $\underline{\pi}$ and $\overline{\pi}$ are slopes of the lines demarcating the no-trading wedge. 

Our goal in this paper is to show that $\underline{\pi}$, $\overline{\pi}$ and the value $u(\eta_S,\eta_B)$ can be written as power series of $\lambda^{\frac{1}{3}}$ for small enough $\lambda$, and provide a way to calculate the coefficients of those power series. For that purpose, we use the results of \cite{CMZ12}, which enable us to express $\underline{\pi}$, $\overline{\pi}$ and the value $u(\eta_S,\eta_B)$ in terms of the solution of a free-boundary problem. We adopt the notation of the constant $\pi$ (the Merton proportion) and the point $N\in \mathbb{R}^2$ in \cite{CMZ12}, i.e., 
\begin{equation}
\begin{split}\label{pi N}
\pi= \frac{\mu}{\sigma^2 (1-p)},\textrm{  and  } N=(x_N,y_N)= \Big( \frac{2 \textrm{sgn} (p) p\mu }{2\sigma^2 \delta (1-p)- p\mu^2},\frac{2 \textrm{sgn} (p)(1-p)^2\sigma^2}{2\sigma^2 \delta (1-p)- p\mu^2}\Big).
\end{split}
\end{equation}
\begin{rem}
In the frictionless case ($\lambda=0$), the no-trading wedge reduces to a single line whose slope equals $\pi$ (see \cite{Mert:71}). Also, when $\lambda=0$, 
the solution $(\ux,\ox,g)$ of the free boundary problem in \cite[Theorem 2.8]{CMZ12} degenerate to $\ux=\ox=x_N$ and $g(x_N)=y_N$. 
\end{rem}

Since we consider small enough transaction costs to investigate the asymptotic behavior, 
it is natural to assume that the problem is well-posed even though there are no transaction costs. 
In other words, we assume the finiteness of the value function for the frictionless case, which corresponds to $2\sigma^2 \delta(1-p)-p\mu^2>0$ (\cite[Remark 9.23]{KarShr98}).
Also, to reduce the technical difficulty\footnote{The case $\pi=1$ corresponds to the case of end-point singularity $\ox=x_P$ in \cite[Proposition 6.9 (2)]{CMZ12}. }, we exclude the case $\pi=1$. Throughout this paper, we assume that following holds.

\bigskip
{\bf Assumption:} $2\sigma^2 \delta(1-p)-p\mu^2>0$ and $\pi\neq 1$.

\bigskip

In the work of \cite{CMZ12}, they show that the value function $u$ in \eqref{primal} and the no-trading wedge $\overline{\pi}$, $\underline{\pi}$ are characterized by the solution of the free boundary problem given in \eqref{free boundary ODE}. For the general cases, some analysis near the point $N$ for the smoothness of the constructed solution of the free boundary problem is required. But when we consider small transaction costs, the solution of the free boundary problem is bounded away from the point $N$. Thus we can simplify \cite[Theorem 2.8]{CMZ12} in the following way. A version of \cite[Theorem 2.8]{CMZ12} can be found in Appendix.

\begin{proposition}\label{previous}
For given small enough $\lambda>0$,
\begin{enumerate}
\item There exists a unique solution $(\ux, \ox, g)$ to the following free boundary problem:
\begin{equation}
\begin{split} \label{free boundary ODE}
&g'(x)=L(x,g(x)), \  0<\ux<\ox , \   g\in C^{\infty}([\ux,\ox]),\\
&g(\ux)=T(\ux), \  g'(\ox)=0, \ \int_{\ux}^{\ox} \tfrac{g'(x)}{x}dx = \ln{(\tfrac{1}{1-\lambda})},
\end{split}
\end{equation}
where $L(x,z)$ and $T(x)$ are given by 
\begin{equation}\label{L and T} \left\{ \begin{array}{ll}L(x,z)=\frac{-\sigma^2(1-p)^3 x^2 +2p(1-p)(sgn(p)+\mu x)z-2\delta p z^2}{(1-p)x(2 sgn(p)+2\mu x + \sigma^2(p^2-1)x) - (2\delta x + p(1-p)(2 sng(p)+2\mu x - \sigma^2 x))z + 2\delta p z^2}, \\
T(x) = \frac{p(1-p)(\sgn(p)+\mu x) +(1-p)\sqrt{p(p+2\sgn(p)p\mu x-(2\sigma^2(1-p)-p\mu^2)x^2)}}{2\delta p}.
\end{array}\right.
\end{equation}
\item $\underline{\pi}$ and $\overline{\pi}$, which determine the no-trading wedge, can be expressed in terms of the solution $(\ux,\ox,g)$ of the problem \eqref{free boundary ODE}:
\begin{equation}
\begin{split}
\underline{\pi}=\frac{(1-p)\ux}{p\, g(\ux)},\quad \overline{\pi}=\frac{(1-p)\ox}{(1-p)\lambda \ox + p (1-\lambda)g(\ox)}. 
\end{split}
\end{equation}
\item The value function $u$ in \eqref{primal} can be written as,
\begin{equation}\label{expu}
u(\eta_S,\eta_B)=\tfrac{1}{p} \big(\eta_B + \eta_S S_0  e^{f(\hx)}\big)^p \vert g(\hx)\vert^{1-p},
\end{equation}
where the function $f :[\ux,\ox]\mapsto [0,\ln (1-\lambda)]$ and the point $\hx\in [\ux,\ox]$ are given by
\begin{equation}
\begin{split}
f(x)&=\ln(1-\lambda)+\int_{x}^{\ox}\frac{g'(x)}{x}dx, \\
\hx &= \left\{ \begin{array}{ll} \ox, & \textrm{if  }\ \frac{S_0 \eta_S}{\eta_B + S_0 \eta_S}>\overline{\pi}, \\ \ux, &\textrm{if  }\  \tfrac{S_0 \eta_S}{\eta_B + S_0 \eta_S}<\underline{\pi}, \\ \textrm{the solution to $\frac{(1-p)x}{p \, g(x)}=\frac{S_0 \eta_S \, e^{f(x)}}{\eta_B + S_0 \eta_S \, e^{f(x)}}$}, &\textrm{otherwise}. \end{array} \right.
\end{split}
\end{equation}
\end{enumerate}
\end{proposition}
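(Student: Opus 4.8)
The plan is to obtain Proposition~\ref{previous} by specializing the full statement of \cite[Theorem 2.8]{CMZ12} (reproduced in the Appendix) to the regime of small~$\lambda$, where the solution of the free-boundary problem stays in a compact set bounded away from the singular point $N$ and from the boundary rays of the ``dual'' domain. The one thing that genuinely needs an argument — everything else is a matter of quoting and unwinding the notation of \cite{CMZ12} — is precisely this confinement: one must show that there is $\lambda_0>0$ such that for $\lambda\in(0,\lambda_0)$ the interval $[\ux,\ox]$ and the graph point $g(\ux)$ live in a fixed neighborhood of $x_N$, resp. $(x_N,y_N)$, inside which $L$ and $T$ in \eqref{L and T} are smooth and the degenerate/boundary phenomena analyzed near $N$ in \cite{CMZ12} do not occur.

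\textbf{Step 1 (Continuity/limit at $\lambda=0$).} I would first recall from the Remark in this section that at $\lambda=0$ the solution degenerates to $\ux=\ox=x_N$, $g(x_N)=y_N$, and that $(x_N,y_N)$ lies in the interior of the region where the coefficient functions $L,T$ are real-analytic (this uses the standing Assumption $2\sigma^2\delta(1-p)-p\mu^2>0$, which makes the denominators and the radicand in \eqref{L and T} nonvanishing at $N$, together with $\pi\neq 1$, which keeps us away from the end-point singularity $\ox=x_P$ of \cite[Proposition 6.9]{CMZ12}). Fix a closed ball $\overline{B}$ around $x_N$ (and a corresponding compact region for the graph) on which $L$ is $C^\infty$, the denominator of $L$ is bounded away from $0$, and the radicand defining $T$ is bounded below by a positive constant.

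\textbf{Step 2 (Confinement for small $\lambda$).} Using the monotone dependence of the solution of \eqref{free boundary ODE} on the ``width'' parameter $\ln\frac{1}{1-\lambda}$ — the third condition in \eqref{free boundary ODE} — and the continuity of the construction in \cite{CMZ12}, I would show that as $\lambda\downarrow 0$ one has $\ux\to x_N$, $\ox\to x_N$ uniformly, so that for $\lambda$ small enough $[\ux,\ox]\subset \overline{B}$ and the graph of $g$ stays in the chosen compact region. Concretely: the integral constraint $\int_{\ux}^{\ox} g'(x)/x\,dx=\ln\frac1{1-\lambda}\to 0$ forces $\ox-\ux\to 0$ provided $\ux$ (equivalently $\ox$) stays in a region where $g'/x$ is bounded; and the anchoring conditions $g(\ux)=T(\ux)$, $g'(\ox)=0$ pin the common limit to the unique point where $T$ meets the zero-slope locus of $L$, which is $x_N$ by the $\lambda=0$ degeneracy. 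This is the step I expect to be the main obstacle, since it requires quantitative control that the relevant denominators do not approach zero along the family of solutions as $\lambda\to 0$; I would handle it by a compactness/contradiction argument (if $\ux$ did not converge to $x_N$, extract a limit solution of the $\lambda=0$ problem violating uniqueness) rather than by explicit estimates.

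\textbf{Step 3 (Transfer of the conclusions).} Once confinement is established, parts (1)–(3) follow by direct citation: existence and uniqueness of $(\ux,\ox,g)$ with $g\in C^\infty([\ux,\ox])$ is \cite[Theorem 2.8]{CMZ12} with the near-$N$ smoothness hypotheses automatically satisfied; the formulas for $\underline\pi$, $\overline\pi$ are exactly the representations derived there in terms of $(\ux,\ox,g)$; and \eqref{expu} together with the definitions of $f$ and $\hx$ is the corresponding value-function formula of \cite{CMZ12}, where the ``otherwise'' case of $\hx$ is well defined because $x\mapsto \frac{(1-p)x}{p\,g(x)}$ is continuous and strictly monotone from $\underline\pi$ to $\overline\pi$ on $[\ux,\ox]$, while $x\mapsto \frac{S_0\eta_S e^{f(x)}}{\eta_B+S_0\eta_S e^{f(x)}}$ is continuous, so the intermediate value theorem supplies a (unique) solution. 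I would close by remarking that $f$ indeed maps $[\ux,\ox]$ onto $[0,\ln(1-\lambda)]$ because $f(\ox)=\ln(1-\lambda)$ and $f(\ux)=\ln(1-\lambda)+\int_{\ux}^{\ox} g'/x\,dx=0$ by the integral constraint.
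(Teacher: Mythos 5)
Your overall strategy (specialize \cite[Theorem 2.8]{CMZ12} to small $\lambda$, where the solution stays away from the degeneracies near $N$) matches the paper's, and your Steps 1--2 are a reasonable, if more laborious, substitute for the paper's route: the paper does not run a compactness/contradiction argument but simply invokes the shooting construction of \cite{CMZ12} (cf.\ Remark~\ref{g alpha}) and the specific statements \cite[Proposition 6.7]{CMZ12} and \cite[Remark 6.15]{CMZ12}, noting that small $\lambda$ forces $\ux\in(x_P,x_N)$ so that the $C^\infty$ regularity and uniqueness results apply directly. That part of your proposal is acceptable.

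The genuine gap is in part (2). You assert that the formulas for $\underline{\pi}$ and $\overline{\pi}$ ``are exactly the representations derived there'' and follow ``by direct citation.'' They do not: Theorem~\ref{thm:main}(3) only gives the optimal proportion with respect to the \emph{shadow} price, $\tfrac{\hvp_t \tilde S_t}{\hvpz_t+\hvp_t\tilde S_t}=\tfrac{X_t}{q\,g(X_t)}$, whereas $\underline{\pi},\overline{\pi}$ are defined as the extreme proportions with respect to the \emph{original} price $S_t$. Converting between the two introduces the factor $e^{f(x)}$, yielding the function $x\mapsto \frac{(1-p)x}{(1-p)(1-e^{f(x)})x+p\,g(x)e^{f(x)}}$ on $[\ux,\ox]$, and one must then prove this function is monotone increasing (the paper does so via the explicit computation $p\,g(x)(1+g'(x))-x\,g'(x)>0$, using $g'=L(\cdot,g)$) in order to conclude that its extrema are attained at $\ux$ and $\ox$; only then do $f(\ux)=0$ and $f(\ox)=\ln(1-\lambda)$ produce the two asymmetric formulas in the statement (note $\overline{\pi}$ carries a $\lambda$ while $\underline{\pi}$ does not, precisely because of this conversion). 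Your proposal omits this derivation entirely. Relatedly, your well-posedness argument for the ``otherwise'' case of $\hx$ misidentifies the range of $x\mapsto\frac{(1-p)x}{p\,g(x)}$: its value at $\ox$ is $\frac{(1-p)\ox}{p\,g(\ox)}$, which is \emph{not} $\overline{\pi}$; the intermediate value argument still works, but only after comparing both sides of the defining equation at the endpoints using $e^{f(\ox)}=1-\lambda$, i.e.\ using the same conversion you skipped.
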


\begin{proof}
(1) The existence of the solution of the free boundary problem in part (1) of Theorem \ref{thm:main} is proved in \cite[Section 6]{CMZ12}, with the formulation $g'(x)=L(x,g(x))$ and $g(\ux)=T(\ux)$. In fact, after substituting the minimizers of the infimum in \eqref{equ:HJB-g}, we obtain \eqref{free boundary ODE}. And the uniqueness of the solution $(\ux,\ox,g)$ can be deduced by \cite[Remark 6.15]{CMZ12}. 
Thus, we only need to show that $g\in C^{\infty}([\ux,\ox])$. If $\pi<1$, then $g\in C^{\infty}([\ux,\ox])$ follows from \cite[Proposition 6.7]{CMZ12}. 
In case $\pi>1$, $g\in C^{\infty}([\ux,\ox])$ follows from \cite[Proposition 6.7 (1) a)]{CMZ12}, because small enough $\lambda$ implies that $\ux\in (x_P,x_N)$ there.

\medskip

\noindent(2) From the expression of the optimal proportion process in part (3) of Theorem \ref{thm:main}, we observe that $\underline{\pi}$ and $\overline{\pi}$ are minimum and maximum values of $\frac{(1-p)x}{(1-p)(1-e^{f(x)})x + p\, g(x) e^{f(x)}}$ for $x\in [\ux,\ox]$, which are
\begin{equation}\nonumber
\begin{split}
\underline{\pi}&=\frac{(1-p)\ux}{(1-p)(1-e^{f(\ux)})\ux + p\, g(\ux) e^{f(\ux)}}=\frac{(1-p)\ux}{p\, g(\ux)},\\
\overline{\pi}&=\frac{(1-p)\ox}{(1-p)(1-e^{f(\ox)})\ox + p\, g(\ox) e^{f(\ox)}}=\frac{(1-p)\ox}{(1-p)\lambda \ox + p (1-\lambda)g(\ox)}. 
\end{split}
\end{equation}
Indeed, by the direct calculation using $g(x)=L(x,g(x))$, we observe that $p\, g(x)(1+g'(x))-x \, g'(x)>0$ and  
$$\frac{d}{dx}\Big(\frac{(1-p)x}{(1-p)(1-e^{f(x)})x + p\, g(x) e^{f(x)}}\Big)=\frac{(1-p)e^{f(x)}(p\, g(x)(1+g'(x))-x \, g'(x))}{((1-p)(1-e^{f(x)})x + p\, g(x) e^{f(x)})^2} >0.$$ 
Thus, $\frac{(1-p)x}{(1-p)(1-e^{f(x)})x + p\, g(x) e^{f(x)}}$ takes its maximum and minimum at $\ox$ and $\ux$, respectively.

\medskip

\noindent(3) The expression \eqref{expu} of the value function $u$ is given in part (3) of Theorem \ref{thm:main}.
\end{proof}

\begin{rem}\label{g alpha}
The proof of part (1) of Theorem \ref{thm:main} consists of two steps:
\begin{enumerate}
\item Fix $\alpha\in(0,x_N)$. Solve the ODE $\ga'(x)=L(x,\ga(x))$ with initial condition $\ga(\alpha)=T(\alpha)$ and let it evolve to the right until $\ba=\inf\big\{x>\alpha: \ga'(x)=0\big\}$.
\item Vary $\alpha$ to meet the integral condition $\int_{\alpha}^{\ba} \frac{\ga'(x)}{x}dx = \log{(\frac{1}{1-\lambda})}$.
\end{enumerate}
\end{rem}

\section{The Solution's Analytic Dependence on $\lambda^{\frac{1}{3}}$ }\label{analytic}
Proposition \ref{previous} says that information about the value function $u$ and the no-trading-wedge is encoded in the solution $(\ux,\ox,g)$ of the free boundary problem \eqref{free boundary ODE}. Thus, we concentrate on the asymptotic behavior of $(\ux,\ox,g)$ in terms of $\lambda$. In Lemma \ref{holomorphic}, motivated by the previous remark, we first consider an initial value problem 
\begin{equation}\label{IVP}
\ga'(x)=L(x,\ga'(x)), \quad \ga(\alpha)=T(\alpha),
\end{equation}
and show that $\ga$ and $\ba$ are analytic functions of $\alpha$. Then in Proposition \ref{lambda}, we show that $\alpha$ is an analytic function of $\lambda^{\frac{1}{3}}$ and conclude that the solution $(\ux,\ox,g)$ analytically depends on $\lambda^{\frac{1}{3}}$.

For convenience, we change notation as $\ga(x)=g(x,\alpha), \ba=\beta(\alpha)$. Then \eqref{IVP} amounts to
\begin{equation}
\begin{split} \label{ODE2}
\tfrac{\partial g}{\partial x}(x,\alpha)=L(x,g(x,\alpha)), \quad g(\alpha,\alpha)=T(\alpha), \quad \tfrac{\partial g}{\partial x}(\beta(\alpha),\alpha)=0.
\end{split}
\end{equation}
We study above ODE in the complex variable framework, and we use the following notation for a complex-polydisk: for $(z,w)\in \bC^2$, $\epsilon>0$ and $\epsilon'>0$, we denote $D_{\epsilon,\epsilon'}(z,w) \in \bC^2$ and $I_{\epsilon'}(w)\in \bC$ as
\begin{displaymath}
\begin{split}
 D_{\epsilon,\epsilon'}(z,w)&=\{(z_1,z_2)\in\bC^2 : \vert z_1-z \vert<\epsilon, \ \vert z_2-w \vert<\epsilon' \}, \\
I_{\epsilon'}(w) &=\{z_2\in \bC : \vert z_2-w \vert <\epsilon'\}.
\end{split}
\end{displaymath}
Before we start Lemma \ref{holomorphic}, let $\tT(z)$ be a holomorphic extension of the real function $T$ in \eqref{L and T} on the complex-neighborhood of $z=x_N$, which is possible since the inside of the square root is non-zero for $x=x_N$.

\begin{lemma}\label{holomorphic}
There exist constants $\epsilon\geq \epsilon'>0$ and holomorphic functions $\tg:D_{\epsilon,\epsilon'}(x_N,x_N)\mapsto \bC$ and $\tilde{\beta}:I_{\epsilon'}(x_N)\mapsto I_{\epsilon}(x_N)$ such that 
\begin{enumerate}
\item $\tfrac{\partial \tg}{\partial z_1}(z_1,z_2)=L(z_1,\tg(z_1,z_2)) \ $ and $ \ \  \tg(z_2,z_2)=\tT(z_2)$.
\item $\tilde{\beta}(z_2)\neq z_2$ for $z_2 \neq z_N$, and $\ \tfrac{\partial \tg}{\partial z_1}(\tilde{\beta}(z_2),z_2)=0$.
\end{enumerate}
\end{lemma}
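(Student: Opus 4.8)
The plan is to build $\tg$ and $\tilde\beta$ in two stages, exactly mirroring the two-step construction in Remark \ref{g alpha} but carried out in the complex-analytic category. First I would handle the ODE $\partial_{z_1}\tg = L(z_1,\tg)$ with the initial condition imposed at the \emph{moving} point $z_2$, namely $\tg(z_2,z_2)=\tT(z_2)$. The key observation is that $L(z_1,z_2)$ in \eqref{L and T} is a rational function whose denominator is nonzero at the base point $(z_1,z_2)=(x_N,y_N)$ — one checks this using the Assumption $2\sigma^2\delta(1-p)-p\mu^2>0$ together with $\pi\neq 1$ — so $L$ is holomorphic on a polydisk around $(x_N,y_N)$. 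Likewise $\tT$ is holomorphic near $z=x_N$ with $\tT(x_N)=y_N$ (the radicand is nonzero there, as noted just before the statement). Then the holomorphic version of the Cauchy--Kovalevskaya / Picard existence theorem for ODEs with holomorphic right-hand side, applied after the change of variables $z_1\mapsto z_1 - z_2$ that freezes the initial point, yields a holomorphic solution $\tg(z_1,z_2)$ on some polydisk $D_{\epsilon,\epsilon'}(x_N,x_N)$ (shrinking $\epsilon'$ so that the initial data $\tT(z_2)$ stays in the domain of holomorphy of $L$), establishing part (1). Uniqueness of the holomorphic solution comes for free from the analogous uniqueness in the holomorphic ODE theorem, which is what makes $\tg$ well-defined and will make it agree with the real solution $\ga$ on the real slice.

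For part (2), I would locate the zero of $\partial_{z_1}\tg(\cdot,z_2)$ by the implicit function theorem. Set $H(z_1,z_2) := \partial_{z_1}\tg(z_1,z_2) = L(z_1,\tg(z_1,z_2))$, a holomorphic function of two variables near $(x_N,x_N)$. At the base point one computes $H(x_N,x_N) = L(x_N,y_N) = 0$; this is precisely the statement in the Remark after Proposition \ref{previous} that in the frictionless limit the free boundary degenerates, i.e. $T(x_N)=y_N$ makes the numerator of $L$ vanish at $(x_N,y_N)$ — indeed one verifies directly from \eqref{L and T} that $L(x_N,T(x_N))=0$. The implicit function theorem then produces a holomorphic $\tilde\beta(z_2)$ near $z_2=x_N$ with $\tilde\beta(x_N)=x_N$ and $H(\tilde\beta(z_2),z_2)\equiv 0$, provided the nondegeneracy condition $\partial_{z_1}H(x_N,x_N)\neq 0$ holds. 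That partial derivative equals $\partial_{z_1}L + \partial_{z_2}L\cdot L$ evaluated at $(x_N,y_N)$, and since $L(x_N,y_N)=0$ it reduces to $\partial_{z_1}L(x_N,y_N)$; this quantity must be shown to be nonzero (it is essentially the second-order coefficient governing the $\lambda^{2/3}$-type behavior, and its nonvanishing is equivalent to the Assumption, in particular to $\pi\neq 1$). After shrinking $\epsilon'$ once more so $\tilde\beta$ maps $I_{\epsilon'}(x_N)$ into $I_\epsilon(x_N)$, one gets $\partial_{z_1}\tg(\tilde\beta(z_2),z_2)=0$.

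It remains to get the strict inequality $\tilde\beta(z_2)\neq z_2$ for $z_2\neq z_N$ — wait, I should read this as $z_2 \neq x_N$ — near the base point. Consider $\psi(z_2) := \tilde\beta(z_2) - z_2$, holomorphic with $\psi(x_N)=0$. If $\psi$ vanished on a sequence accumulating at $x_N$ it would be identically zero, forcing $\tilde\beta(z_2)=z_2$, hence $0 = \partial_{z_1}\tg(z_2,z_2) = L(z_2,\tT(z_2))$ for all $z_2$ near $x_N$; but $L(z,\tT(z))$ is not identically zero (its derivative at $x_N$ is the nonzero quantity $\partial_{z_1}H(x_N,x_N)$ computed above, again using $\pi\neq 1$), a contradiction. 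So $\psi$ has an isolated zero at $x_N$, and after a final shrink of $\epsilon'$ we may assume $\psi(z_2)\neq 0$ for $0<|z_2-x_N|<\epsilon'$, which is part (2). Finally one arranges $\epsilon\geq\epsilon'$ by enlarging $\epsilon$ if necessary (legitimate since the polydisk can be taken wide in $z_1$).

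The main obstacle I anticipate is the verification that $\partial_{z_1}L(x_N,y_N)\neq 0$ — equivalently that $z\mapsto L(z,\tT(z))$ has a simple (rather than higher-order) zero at $x_N$. This is the one genuinely computational point, and it is exactly where the hypothesis $\pi\neq 1$ enters: when $\pi=1$ this derivative degenerates, which is consistent with the footnote that $\pi=1$ corresponds to an end-point singularity in \cite{CMZ12}. Everything else — holomorphy of $L$ and $\tT$, the holomorphic ODE existence/uniqueness theorem, the implicit function theorem, and the isolated-zeros argument — is standard one- and several-complex-variable machinery.
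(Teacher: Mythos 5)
Part (1) of your plan is fine and matches the paper. Part (2), however, has a genuine gap that breaks the argument. You set $H(z_1,z_2)=\partial_{z_1}\tg(z_1,z_2)$ and apply the implicit function theorem at $(x_N,x_N)$, requiring $\partial_{z_1}H(x_N,x_N)=\partial^2_{z_1}\tg(x_N,x_N)\neq 0$. This quantity is in fact \emph{zero}: the paper computes $\partial^2_{z_1}\tg(x_N,x_N)=\tfrac{d}{dz_1}L(z_1,\tg(z_1,x_N))\big\vert_{z_1=x_N}=0$. The structural reason is the identity $\partial_{z_1}\tg(z_2,z_2)=L(z_2,\tT(z_2))\equiv 0$: the function $T$ in \eqref{L and T} is by construction (quadratic formula) a root of the numerator of $L$, so the entire diagonal $\{z_1=z_2\}$ lies in the zero set of $\partial_{z_1}\tg$. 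Consequently the zero set near $(x_N,x_N)$ consists of two branches crossing there — the diagonal and the free boundary — which forces the degeneracy $\partial^2_{z_1}\tg(x_N,x_N)=0$; and even if one applied the IFT formally, it could only return the trivial branch $\tilde\beta(z_2)=z_2$, not the boundary you want. This same identity invalidates your isolated-zeros argument at the end: you derive a contradiction from ``$L(z,\tT(z))$ is not identically zero,'' but it \emph{is} identically zero.

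The paper's device, which you are missing, is to divide out the known trivial branch: define $H(z_1,z_2)=\partial_{z_1}\tg(z_1,z_2)/(z_1-z_2)$, extend it holomorphically across the diagonal by the Riemann removable singularity theorem (the limit there is $\partial^2_{z_1}\tg(z_2,z_2)$), and apply the IFT to this quotient $\tH$. The nondegeneracy condition then becomes $\partial_{z_1}\tH(x_N,x_N)=\tfrac12\partial^3_{z_1}\tg(x_N,x_N)\neq 0$, a \emph{third}-derivative computation equal to $-\tfrac{(2\sigma^2\delta(1-p)-p\mu^2)^2}{2p^2\sigma^2\mu(\pi-1)^2}$ — this is where $\pi\neq 1$ and the well-posedness assumption actually enter, one order higher than you anticipated. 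The claim $\tilde\beta(z_2)\neq z_2$ for $z_2\neq x_N$ is then obtained not by an isolated-zeros argument but by solving $\tH(z_2,z_2)=0$ explicitly, which holds only at $z_2=0$ and $z_2=x_N$. So the overall architecture (holomorphic ODE theorem plus IFT) is right, but the object to which the IFT must be applied is the desingularized quotient, and the decisive computation is one derivative deeper than in your proposal.
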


\begin{proof}
(1) In \eqref{L and T} we observe that $L(z_1,z_2)$ is holomorphic near $N$, as a complex variable function. By \cite[Theorm 1.1 and Remark 1.6]{ODE1}, the initial value problem
\begin{equation}
\begin{split}\label{h}
h'(z_1)=L(z_1,h(z_1)), \quad h(z_2)=z_3
\end{split}
\end{equation}
has a unique solution $h(z_1,z_2,z_3)$ which is defined and holomorphic on some complex-neighborhood of $(x_N,x_N,y_N)\in \bC^3$. We define $\tg$ as
\begin{equation}
\begin{split}
\tg(z_1,z_2)=h(z_1,z_2,\tT(z_2)).
\end{split}
\end{equation}
Since the map $(z_1,z_2)\mapsto (z_1,z_2,\tT(z_2))$ is holomorphic at $(x_N,x_N)\in \bC^2$, $\tg$ is holomorphic at $(x_N,x_N)\in \bC^2$. Thus, there exists $\epsilon>0$ such that $\tg(z_1,z_2)$ is holomorphic on $D_{\epsilon,\epsilon}(x_N,x_N)$.
Obviously, the definition of $\tg$ and \eqref{h} implies that, on $D_{\epsilon,\epsilon}(x_N,x_N)$,
\begin{equation}\label{tg}
\tfrac{\partial \tg}{\partial z_1}(z_1,z_2)=L(z_1,\tg(z_1,z_2)), \textrm{  and  } \tg(z_2,z_2)=\tT(z_2).
\end{equation}

\medskip

\noindent(2) Our goal is to define $\tilde{\beta}(z_2)$ as a solution of $\tfrac{\partial \tg}{\partial z_1}(\cdot,z_2)=0$, different from $z_2$. For that purpose, we first consider a holomorphic function $H(z_1,z_2)$ defined on $D_{\epsilon,\epsilon}(x_N,x_N)\setminus \{(z_1,z_2)\in \bC^2 : z_1=z_2\}$ as
\begin{equation}
\begin{split}\label{def H}
H(z_1,z_2) = \frac{\frac{\partial \tg}{\partial z_1}(z_1,z_2)}{z_1-z_2}.
\end{split}
\end{equation}
We observe that $H$ is locally bounded at $\{(z_1,z_2)\in \bC^2 : z_1=z_2\}$, because
\begin{equation}
\begin{split}\label{H}
\lim_{z_1 \to z_2} H(z_1,z_2) = \lim_{z_1 \to z_2} \frac{\frac{\partial \tg}{\partial z_1}(z_1,z_2)-\frac{\partial \tg}{\partial z_1}(z_2,z_2)}{z_1-z_2}=\tfrac{\partial^2 \tg}{\partial z_1^2} (z_2,z_2),
\end{split}
\end{equation}
where the first equality followed by an obervation
\begin{equation}\label{g zero}
\tfrac{\partial \tg}{\partial z_1}(z_2,z_2)=L(z_2,\tg(z_2,z_2))=L(z_2,\tT(z_2))=0.
\end{equation}
Since codimension of the set $\{(z_1,z_2)\in \bC^2 : z_1=z_2\}$ in $\bC^2$ is 1, we can use the Riemann Removable Singularity Theorem(\cite[Theorem 4.2.1]{ODE2}) to conclude that $H$ has the unique holomorphic continuation $\tH:D_{\epsilon,\epsilon}(x_N,x_N)\mapsto \bC$. To apply the Implicit Function Theorem, we show that $\tH(x_N,x_N)=0$ and $\tfrac{\partial \tH}{\partial z_1}(x_N,x_N)\neq 0$.\\
i) $\tH(x_N,x_N)=0$: By using \eqref{H} and \eqref{tg}, $\tg(x_N,x_N)=y_N$ and $\tfrac{\partial \tg}{\partial z_1}(x_N,x_N)=0$,
\begin{equation}
\begin{split}
\tH(x_N,x_N)&=\lim_{z_1 \to x_N} \tH(z_1,x_N) = \lim_{z_1 \to x_N} H(z_1,x_N) \\
&= \tfrac{\partial^2 \tg}{\partial z_1^2} (x_N,x_N) = \tfrac{d }{d z_1} \Big(L(z_1,\tg(z_1,x_N))\Big)\Big\vert_{z_1=x_N} = 0.
\end{split}
\end{equation}
ii) $\tfrac{\partial \tH} {\partial z_1}(x_N,x_N)\neq 0$: By using $\tfrac{\partial \tg}{\partial z_1}(x_N,x_N)=\tfrac{\partial^2 \tg}{\partial z_1^2} (x_N,x_N)=0$ and $\tg(x_N,x_N)=y_N$,
\begin{equation}
\begin{split}
&\tfrac{\partial \tH} {\partial z_1}(x_N,x_N) =\lim_{z_1\to x_N} \tfrac{\partial H} {\partial z_1}(z_1,x_N)=\lim_{z_1\to x_N} \tfrac{(z_1-x_N)\frac{\partial^2 \tg}{\partial z_1^2}(z_1,x_N) - \frac{\partial \tg}{\partial z_1}(z_1,x_N) }{(z_1-x_N)^2} \\
&\qquad=\lim_{z_1\to x_N} \tfrac{\frac{\partial^2 \tg}{\partial z_1^2}(z_1,x_N)-\frac{\partial^2 \tg}{\partial z_1^2}(x_N,x_N)   }{z_1-x_N}  -  \tfrac{\frac{\partial \tg}{\partial z_1}(z_1,x_N)-\frac{\partial \tg}{\partial z_1}(x_N,x_N)- \frac{\partial^2 \tg}{\partial z_1^2}(x_N,x_N)(z_1-x_N)}{(z_1-x_N)^2} \\
&\qquad=\tfrac{1}{2} \tfrac{\partial^3 \tg}{\partial z_1^3}(x_N,x_N)  = \tfrac{1}{2} \tfrac{d^2 }{d z_1^2} \Big( L(z_1,\tg(z_1,x_N))\Big) \Big\vert_{z_1=x_N} \\
&\qquad=-\tfrac{(2\sigma^2\delta(1-p)-p \mu^2)^2}{2p^2 \sigma^2 \mu (\pi-1)^2} \neq 0.
\end{split}
\end{equation}

Now we can apply the holomorphic version of the Implicit Function Theorem (\cite[Theorem 3.1.4]{ODE2}) to obtain a constant $\epsilon' \in (0,\epsilon)$ and a holomorphic function $\tilde{\beta}: I_{\epsilon'}(x_N) \mapsto I_{\epsilon}(x_N)$ such that
\begin{equation}
\begin{split}
\tH(\tilde{\beta}(z_2),z_2)=0 \ \textrm{  and  }\ \tilde{\beta}(x_N)=x_N.
\end{split}
\end{equation}
From a direct calculation using \eqref{tg}, \eqref{H} and \eqref{g zero}, we observe that
$$\tH(z_2,z_2)=0 \iff \tT(z_2)=\tfrac{(1-p)^2\sigma^2 z_2}{p\mu} \iff z_2 = 0 \textrm{  or  } x_N.$$
Hence, $\tilde{\beta}(z_2)\neq z_2$ for $z_2 \neq x_N$.

Finally, it remains to show that $\tfrac{\partial\tg}{\partial z_1}(\tilde{\beta}(z_2),z_2)=0$ for $z_2 \in I_{\epsilon'}(x_N)$. In case $z_2=x_N$, $\tfrac{\partial \tg}{\partial z_1}(\tilde{\beta}(x_N),x_N)=\tfrac{\partial \tg}{\partial z_1}(x_N,x_N) =0$. 
If $z_2 \neq x_N$, then $\tilde{\beta}(z_2)\neq z_2$, so, 
$$0=\tH(\tilde{\beta}(z_2),z_2)=H(\tilde{\beta}(z_2),z_2)=\frac{\frac{\partial \tg}{\partial z_1}(\tilde{\beta}(z_2),z_2)}{\tilde{\beta}(z_2)-z_2},$$
which implies that $\tfrac{\partial \tg}{\partial z_1}(\tilde{\beta}(z_2),z_2)=0$.
\end{proof}

\begin{rem}[(Coefficients of the power series expansions of $\tg$ and $\tilde{\beta}$)]\label{g z1z2}
In Lemma \ref{holomorphic}, $\tg$ and $\tilde{\beta}$ are holomorphic on $D_{\epsilon,\epsilon'}(x_N,x_N)$ and $\ I_{\epsilon'}(x_N)$, respectively. So, they can be written as power series on their holomorphic domains;
\begin{equation}
\begin{split}\label{g aij}
\tg(z_1,z_2)=\sum_{i\geq 0} \sum_{j\geq 0}a_{ij}(z_1-x_N)^i (z_2-x_N)^j, \quad \tilde{\beta}(z_2)=\sum_{i\geq0} b_i (z_2-x_N)^i.
\end{split}
\end{equation}
To detemine the coefficients $a_{ij}$'s, we substitute the expression
\begin{equation}
\begin{split}
\tg(z_1,z_2)&=\sum_{i=0}^n \sum_{j=0}^{n-1}a_{ij}(z_1-x_N)^i (z_2-x_N)^j + O((z_1-x_N)^{n+1}+(z_2-x_N)^n),\\
\tfrac{\partial\tg}{\partial z_1}(z_1,z_2) &= \sum_{i=0}^{n} \sum_{j=0}^{n-1}ia_{ij}(z_1-x_N)^{i-1} (z_2-x_N)^j + O((z_1-x_N)^n+(z_2-x_N)^n),\\
\end{split}
\end{equation}
into \eqref{tg}. Then, the first one generates $n^2$ equations and the second one generates $n$ equations of $\{a_{ij}\}_{0\leq i \leq n, 0\leq j \leq n-1}$. Since we have $n(n+1)$ equations of $n(n+1)$ unknowns, we can calculate $\{a_{ij}\}_{0\leq i \leq n, 0\leq j \leq n-1}$ for any $n$. Similarly, we can calculate $b_i$'s from $\tfrac{\partial \tg}{\partial z_1}(\tilde{\beta}(z_2),z_2)=0$. 
\end{rem}

\begin{proposition}\label{lambda}
Let $\tg:D_{\epsilon}(x_N,x_N) \mapsto \bC$, $\tilde{\beta}:I_{\epsilon'}(x_N)\mapsto I_{\epsilon}(x_N)$ be as in Lemma \ref{holomorphic}. Then, there exist a constant $\epsilon''>0$ and a holomorphic function $\alpha:I_{\epsilon''}(0)\mapsto I_{\epsilon'}(x_N)$ which satisfies following property; for any positive real number $\lambda<\epsilon''$, the solution $(\ux,\ox,g)$ of the free boundary problem \eqref{free boundary ODE} can be written as
\begin{equation}
\begin{split}
\ux = \alpha(\lambda^{\frac{1}{3}}),\ \ox= \tilde{\beta}(\alpha(\lambda^{\frac{1}{3}})), \textrm{  and  }g(x) = \tg(x,\alpha(\lambda^{\frac{1}{3}})) \textrm{  for  } x\in [\ux,\ox].
\end{split}
\end{equation}
\end{proposition}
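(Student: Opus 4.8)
The plan is to reduce the claim to a one–variable analytic implicit–function statement. By Lemma~\ref{holomorphic} we have holomorphic $\tg$ and $\tilde\beta$ near $x_N$, and by Remark~\ref{g alpha} the solution of the free boundary problem is obtained by choosing $\alpha\in(0,x_N)$ so that the integral condition $\int_\alpha^{\beta(\alpha)}\frac{g'(x)}{x}\,dx=\ln\frac{1}{1-\lambda}$ holds, with $g=\ga$, $\beta=\ba$. So I define, for $z_2$ in a real neighborhood of $x_N$,
\begin{equation}
\begin{split}\label{def Phi}
\Phi(z_2) := \int_{z_2}^{\tilde\beta(z_2)} \frac{1}{z_1}\,\tfrac{\partial\tg}{\partial z_1}(z_1,z_2)\,dz_1,
\end{split}
\end{equation}
interpreted as a contour integral in the $z_1$–plane (the integrand is holomorphic on $D_{\epsilon,\epsilon'}(x_N,x_N)$ away from $z_1=0$, and $x_N\neq 0$, so for $\epsilon$ small the straight segment from $z_2$ to $\tilde\beta(z_2)$ stays in the domain and avoids the origin). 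Then $\Phi$ is holomorphic near $x_N$, and the free boundary problem for parameter $\lambda$ is solved precisely by any $\alpha$ with $\Phi(\alpha)=\ln\frac{1}{1-\lambda}$; the uniqueness assertion of Proposition~\ref{previous}(1) guarantees that this $\alpha$, once found near $x_N$, is \emph{the} solution.

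The heart of the matter is to understand the order of vanishing of $\Phi$ at $x_N$. We have $\Phi(x_N)=0$ since $\tilde\beta(x_N)=x_N$. I would expand: using $\tfrac{\partial\tg}{\partial z_1}(z_1,z_2)=H(z_1,z_2)(z_1-z_2)=\tH(z_1,z_2)(z_1-z_2)$ from \eqref{def H}, and the facts established in the proof of Lemma~\ref{holomorphic} that $\tH(x_N,x_N)=0$, $\tfrac{\partial\tH}{\partial z_1}(x_N,x_N)\neq 0$, together with $\tilde\beta(z_2)-z_2$ vanishing to first order at $x_N$ (with $\tilde\beta'(x_N)$ to be computed from the implicit function construction), a Taylor expansion of the integral in \eqref{def Phi} should show $\Phi(z_2)=c(z_2-x_N)^3+O((z_2-x_N)^4)$ with an explicitly computable $c\neq 0$. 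Granting this, write $\Phi(z_2)=(z_2-x_N)^3\Psi(z_2)$ with $\Psi$ holomorphic near $x_N$ and $\Psi(x_N)=c\neq 0$; choose a holomorphic cube root $\Psi^{1/3}$ near $x_N$, so that $\Phi(z_2)=\big((z_2-x_N)\Psi(z_2)^{1/3}\big)^3$. The map $w\mapsto (w-x_N)\Psi(w)^{1/3}$ is holomorphic with nonvanishing derivative at $x_N$, hence has a holomorphic local inverse $\chi$ with $\chi(0)=x_N$. Since $\ln\frac{1}{1-\lambda}=\lambda+O(\lambda^2)$ has a holomorphic cube root $\rho(\lambda^{1/3})=\lambda^{1/3}(1+O(\lambda))$ for small $\lambda>0$, setting $\alpha(t):=\chi(t\,(1+O(t^3))^{1/3})$ — more precisely $\alpha(t) := \chi\big((\ln\frac{1}{1-t^3})^{1/3}\big)$ suitably interpreted — gives a function holomorphic in $t=\lambda^{1/3}$ near $0$ with $\Phi(\alpha(\lambda^{1/3}))=\ln\frac{1}{1-\lambda}$.

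To finish I would check the bookkeeping: for real $\lambda\in(0,\epsilon'')$, $\alpha(\lambda^{1/3})$ is real (by reality of all data, or by restricting the inverse function to the real axis, using that $\Phi$ maps a real neighborhood of $x_N$ to a real neighborhood of $0$ with real nonzero leading coefficient) and lies in $(0,x_N)$ or $(x_N,\ldots)$ as appropriate — the sign of $c$ dictates on which side of $x_N$ the root sits, and this is forced to be consistent with the requirement $0<\ux<\ox$ from \eqref{free boundary ODE}. Then $\ux=\alpha(\lambda^{1/3})$ satisfies $\ga(\alpha)=T(\alpha)$, $\ga'(\ba)=0$ with $\ba=\tilde\beta(\alpha)=\ox$ and $g=\ga=\tg(\cdot,\alpha)$, and the integral condition holds; by the uniqueness in Proposition~\ref{previous}(1) this is the solution, giving the three displayed identities. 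The main obstacle I anticipate is the cube-root computation: verifying that the leading term of $\Phi$ at $x_N$ is genuinely of order three (not one or two) and that its coefficient is nonzero. Orders one and two should die because $\tfrac{\partial\tg}{\partial z_1}$ carries the factor $(z_1-z_2)$ which is small of order $(z_2-x_N)$ on the whole contour while $\tH$ itself vanishes at $(x_N,x_N)$; assembling these and the length $\tilde\beta(z_2)-z_2$ of the contour, each contributing one power, gives three, and the nonvanishing of the coefficient should follow from $\tfrac{\partial\tH}{\partial z_1}(x_N,x_N)\neq0$ computed in Lemma~\ref{holomorphic}. I would do this expansion carefully, perhaps parametrizing the segment as $z_1=z_2+s(\tilde\beta(z_2)-z_2)$, $s\in[0,1]$, and expanding in powers of $(z_2-x_N)$.
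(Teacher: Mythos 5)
Your proposal follows essentially the same route as the paper: the paper's $G$ is a $z_1$-antiderivative of the same integrand, so your $\Phi(z_2)$ is exactly $G(\tilde{\beta}(z_2),z_2)-G(z_2,z_2)$, and the paper likewise establishes cubic vanishing at $x_N$ with a nonzero leading coefficient, extracts a holomorphic cube root, and applies the inverse function theorem (the paper inverts $z_2\mapsto 1-e^{-\Phi(z_2)}$, whose value is exactly $\lambda$, rather than $\Phi$ itself and then composing with $(\log\tfrac{1}{1-t^3})^{1/3}$ --- a cosmetic difference). The one step you defer, verifying that the order-three coefficient is genuinely nonzero, is precisely what the paper supplies by direct computation ($c_0=-\tfrac{(2\sigma^2\delta(1-p)-p\mu^2)^3}{6p^3\sigma^2(\pi-1)^2\mu^2}$), together with the sign checks $\alpha'(0)=c_0^{-1/3}<0$ and $\tilde{\beta}'(x_N)=-1$ that settle on which side of $x_N$ the root lies and give $\ux<\ox$.
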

\begin{proof}
Since $\frac{1}{z_1} \cdot \frac{\partial\tg}{\partial z_1}(z_1,z_2) $ is holomorphic on $D_{\epsilon,\epsilon'}(x_N,x_N)$ (if necessary, we may reduce $\epsilon$), there exists a holomorphic function $G:D_{\epsilon,\epsilon'}(x_N,x_N)\mapsto \bC$ such that
\begin{equation}\label{defG}
\tfrac{\partial G}{\partial z_1}(z_1,z_2)=\tfrac{1}{z_1} \cdot \tfrac{\partial\tg}{\partial z_1}(z_1,z_2).
\end{equation}
By the method used in Remark \ref{g z1z2}, we can find the following series expansion valids on $I_{\epsilon'}(x_N)$,
\begin{equation}\label{expG}
1-\exp \Big( G(z_2,z_2)-G(\tilde{\beta}(z_2),z_2)\Big) = (z_2-x_N)^3 \big( c_0 + \sum_{i\geq 1} c_i (z_2-x_N)^i \big),
\end{equation}
where $c_0 = -\tfrac{(2\sigma^2 \delta (1-p) - p \mu^2 )^3}{6p^3 \sigma^2 (\pi-1)^2 \mu^2} \neq 0$. Therefore, we can find a holomorphic function $F$ on some complex-neighborhood of $x_N$ such that
\begin{equation}\label{expF}
F(z_2)^3 = c_0 + \sum_{i\geq 1} c_i (z_2-x_N)^i ,  \textrm{  and  } F(x_N)=c_0^{\frac{1}{3}}.
\end{equation}
We observe that $\frac{d}{d z_2} \big((z_2-x_N) F(z_2)\big) \Big\vert_{z_2=x_N}=c_0^{\frac{1}{3}} \neq 0$. Hence, by the Inverse Function Theorem (\cite[Theorem 3.1.4]{ODE2}),
there exist $\epsilon''>0$ and a holomorphic function $\alpha:I_{\epsilon''}(0)\mapsto I_{\epsilon'}(x_N)$ such that
\begin{equation}\label{inverseF}
(\alpha(\omega)-x_N)F(\alpha(\omega))=\omega, \textrm{  and  } \alpha(0)=x_N.
\end{equation}

In Remark \ref{g z1z2}, we can easily observe that $a_{ij}$'s and $b_i$'s are real numbers, since the coefficients of $L$ and $\tT$ are real numbers. Thus, the holomorphic functions $\tg,\tilde{\beta}, G,F$ and $\alpha$ restricted to real variables are real analytic functions. Then, for a positive real number $\lambda<\epsilon''$,
\begin{equation}
\begin{split}
\lambda &= (\alpha(\lambda^{\frac{1}{3}})-x_N)^3 F(\alpha(\lambda^{\frac{1}{3}}))^3 \\
&= 1-\exp \Big( G(z_2,z_2)-G(\tilde{\beta}(z_2),z_2)\Big)\Big\vert_{z_2=\alpha(\lambda^{\frac{1}{3}})}\\
&=1-\exp \Big( -\int_{ \alpha(\lambda^{\frac{1}{3}})}^{\tilde{\beta}(\alpha(\lambda^{\frac{1}{3}}))} \tfrac{1}{x}\cdot \tfrac{\partial \tg}{\partial z_1}(x,\alpha(\lambda^{\frac{1}{3}})) dx \Big),
\end{split}
\end{equation}
which is equivalent to 
$$\int_{ \alpha(\lambda^{\frac{1}{3}})}^{\tilde{\beta}(\alpha(\lambda^{\frac{1}{3}}))}  \tfrac{1}{x}\cdot \tfrac{\partial \tg}{\partial z_1}(x,\alpha(\lambda^{\frac{1}{3}})) dx=\log \big(\tfrac{1}{1-\lambda}\big).$$
Furthermore, from Lemma \ref{holomorphic}, $\tg$ and $\tilde{\beta}$ satisfy
\begin{multline}\nonumber
\tfrac{\partial \tg}{\partial z_1}(x,\alpha(\lambda^{\frac{1}{3}}))=L(x,\tg(x,\alpha(\lambda^{\frac{1}{3}}))), \quad \tg(\alpha(\lambda^{\frac{1}{3}}),\alpha(\lambda^{\frac{1}{3}}))=T(\alpha(\lambda^{\frac{1}{3}})), \\
\textrm{and  } \ \tfrac{\partial \tg}{\partial z_1}(\tilde{\beta}(\alpha(\lambda^{\frac{1}{3}})),\alpha(\lambda^{\frac{1}{3}}))=0.
\end{multline}
Thus, from the uniqueness of the solution $(\ux,\ox,g)$ of \eqref{free boundary ODE}, the proof is done, if we show that $\alpha(\lambda^{\frac{1}{3}})<\tilde{\beta}(\alpha(\lambda^{\frac{1}{3}}))$. For small enough positive $\lambda$, $\alpha(\lambda^{\frac{1}{3}})<x_N$ because $\alpha'(0)=c_0^{-\frac{1}{3}}<0$.
Since $\tilde{\beta}'(x_N)=-1$ (from a direct calculation, $b_1 =-1$ in \eqref{g aij}), $\tilde{\beta}(\alpha(\lambda^{\frac{1}{3}}))>x_N$ for small enough $\lambda>0$, and we conclude that $\alpha(\lambda^{\frac{1}{3}})<\tilde{\beta}(\alpha(\lambda^{\frac{1}{3}}))$. If necessary, we may reduce $\epsilon''>0$ to satisfy these arguments.
\end{proof}

\begin{rem}\label{alpha}
In \eqref{expG}, we can compute $c_i$'s by computing $a_{ij}$'s and $b_i$'s and applying them to \eqref{defG}. We can also calculate the series expansion of $F$ from \eqref{expF}. Finally, we can compute the coefficients $\{d_i\}_{i\geq 1}$ of the expression $\alpha(z)=x_N + \sum_{i\geq 1} d_i z^i$, from the fact that $\alpha$ is an inverse function of the map $z_2 \mapsto (z_2-x_N) F(z_2)$.
\end{rem}

\section{Main Result: Asymptotic Analysis}\label{series}

Now we present result about asymptotic analysis of the no-trading wedge and the value function.

\begin{theorem}{(Expansion of the no-trade region)}
\label{no-trading expansion}
 $\underline{\pi}$ and $\overline{\pi}$, which determine the no-trading region, can be written as power series of $\lambda^{\frac{1}{3}}$ which holds for small enough $\lambda>0$. To be specific, 
\begin{equation}
\begin{split}\label{asymp pi}
\underline{\pi} = \sum_{i\geq0} \underline{s}_i (\lambda^{\frac{1}{3}})^i 
= \pi  - \big(\tfrac{3\pi^2(1-\pi)^2}{4(1-p)}  \big)^{\frac{1}{3}} \lambda^{\frac{1}{3}} 
- \tfrac{\pi (2\sigma^2 \delta (1-p)- p\mu^2)}{2  \sigma^4 (1-p)^2 \big( 6\pi^2 (1-\pi)^2 (1-p)^2 \big)^{\frac{1}{3}}} \lambda^{\frac{2}{3}} 
+ O(\lambda),\\
\overline{\pi} = \sum_{i\geq0} \overline{s}_i (\lambda^{\frac{1}{3}})^i 
= \pi  + \big(\tfrac{3\pi^2(1-\pi)^2}{4(1-p)}  \big)^{\frac{1}{3}} \lambda^{\frac{1}{3}} 
- \tfrac{\pi (2\sigma^2 \delta (1-p)- p\mu^2)}{2  \sigma^4 (1-p)^2 \big( 6\pi^2 (1-\pi)^2 (1-p)^2 \big)^{\frac{1}{3}}} \lambda^{\frac{2}{3}} 
+ O(\lambda). 
\end{split}
\end{equation}
If one wants to compute more $\underline{s}_i,\overline{s}_i$'s, it can be done by recursive calculations. 
\end{theorem}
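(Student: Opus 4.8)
The plan is to extract closed formulas for $\underline{\pi}$ and $\overline{\pi}$ as functions of $\lambda^{1/3}$ by combining Proposition~\ref{previous}(2) with Proposition~\ref{lambda}, and then to observe that the resulting expressions are holomorphic in $\omega=\lambda^{1/3}$ near $\omega=0$. By Proposition~\ref{lambda} we have $\ux=\alpha(\omega)$, $\ox=\tilde{\beta}(\alpha(\omega))$, $g(\ux)=\tg(\alpha(\omega),\alpha(\omega))=\tT(\alpha(\omega))$ (the last equality by Lemma~\ref{holomorphic}(1)), and $g(\ox)=\tg(\tilde{\beta}(\alpha(\omega)),\alpha(\omega))$. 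Substituting into Proposition~\ref{previous}(2) gives
\[
\underline{\pi}=\frac{(1-p)\,\alpha(\omega)}{p\,\tT(\alpha(\omega))},\qquad
\overline{\pi}=\frac{(1-p)\,\tilde{\beta}(\alpha(\omega))}{(1-p)\omega^{3}\,\tilde{\beta}(\alpha(\omega))+p(1-\omega^{3})\,\tg(\tilde{\beta}(\alpha(\omega)),\alpha(\omega))}.
\]

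The right-hand sides are ratios of compositions of the holomorphic functions $\alpha,\tilde{\beta},\tT,\tg$ furnished by Lemma~\ref{holomorphic} and Proposition~\ref{lambda}, together with the polynomial $\omega^{3}$. Using $\alpha(0)=x_N$, $\tilde{\beta}(x_N)=x_N$ and $\tg(x_N,x_N)=\tT(x_N)=y_N$, both denominators tend to $p\,y_N$ as $\omega\to 0$, and $p\,y_N\neq 0$ under the standing Assumption (which keeps $y_N$ finite, and $y_N$ is nonzero since $p\neq 0$). Hence $\underline{\pi}$ and $\overline{\pi}$ are holomorphic in $\omega$ on a neighborhood of $0$; restricting to real $\omega=\lambda^{1/3}\geq 0$ yields convergent power series $\underline{\pi}=\sum_{i\geq 0}\underline{s}_i(\lambda^{1/3})^i$ and $\overline{\pi}=\sum_{i\geq 0}\overline{s}_i(\lambda^{1/3})^i$ for small $\lambda$, with $\underline{s}_0=\overline{s}_0=\frac{(1-p)x_N}{p\,y_N}=\pi$.

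For the coefficients one simply expands the two displayed formulas. By Remark~\ref{alpha} the Taylor coefficients of $\alpha(\omega)=x_N+\sum_{i\geq 1}d_i\omega^i$ are computable recursively, with $d_1=\alpha'(0)=c_0^{-1/3}$ and $c_0=-\tfrac{(2\sigma^2\delta(1-p)-p\mu^2)^3}{6p^3\sigma^2(\pi-1)^2\mu^2}$; by Remark~\ref{g z1z2} the coefficients of $\tilde{\beta}$ and of $\tg$ about $x_N$ are computable recursively, with $\tilde{\beta}'(x_N)=b_1=-1$; and $\tT$ is explicit. Feeding these formal power series into the expressions above and performing formal multiplication and power-series inversion produces $\underline{s}_i,\overline{s}_i$ to any order; carrying this to second order and simplifying with $\pi=\mu/(\sigma^2(1-p))$ and $x_N/y_N=p\mu/((1-p)^2\sigma^2)$ yields \eqref{asymp pi}. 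In particular, since the numerator of $\overline{\pi}$ carries the extra factor $\tilde{\beta}'(x_N)=-1$ at first order while the first-order behaviors of the two denominators coincide (using $\partial_{z_1}\tg(x_N,x_N)=L(x_N,y_N)=0$ and $\partial_{z_2}\tg(x_N,x_N)=\tT'(x_N)$), one gets $\overline{s}_1=-\underline{s}_1$ while the $\lambda^{2/3}$ coefficients agree. The analyticity half is thus essentially immediate given Propositions~\ref{previous} and~\ref{lambda}; the only place care is needed is the coefficient bookkeeping, where the two hazards are (i) choosing the cube-root branch consistently — the sign of $c_0$ depends on $\mathrm{sgn}(p)$, and $d_1=c_0^{-1/3}$ must be taken with the sign of $\alpha'(0)$ pinned down in the proof of Proposition~\ref{lambda} (so that $\ux<x_N$), which is precisely what forces $\underline{s}_1<0<\overline{s}_1$ — and (ii) retaining enough terms of each series, since the $\lambda$-linear term $(1-p)\omega^3\tilde{\beta}(\alpha(\omega))$ in the denominator of $\overline{\pi}$ first enters at order $\lambda$. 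I would verify the displayed closed forms with an independent symbolic computation.
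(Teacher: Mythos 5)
Your proposal is correct and follows essentially the same route as the paper: substitute $\ux=\alpha(\lambda^{1/3})$, $\ox=\tilde{\beta}(\alpha(\lambda^{1/3}))$, $g(\ux)=\tg(\alpha(\lambda^{1/3}),\alpha(\lambda^{1/3}))$ and $g(\ox)=\tg(\tilde{\beta}(\alpha(\lambda^{1/3})),\alpha(\lambda^{1/3}))$ into Proposition~\ref{previous}(2), conclude analyticity in $\lambda^{1/3}$ from the holomorphy supplied by Lemma~\ref{holomorphic} and Proposition~\ref{lambda}, and obtain the coefficients by the recursive computations of Remarks~\ref{g z1z2} and~\ref{alpha}. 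Your added checks (nonvanishing of the denominators at $\omega=0$, the sign of $\alpha'(0)$ fixing $\underline{s}_1<0<\overline{s}_1$) are consistent with, and slightly more explicit than, the paper's argument.
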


\begin{proof}
By applying of Proposition \ref{lambda} to the expression Proposition \ref{previous} (ii), we can rewrite $\underline{\pi}$ and $\overline{\pi}$ as
\begin{equation}\nonumber
\begin{split}
\underline{\pi}=\frac{(1-p)\alpha(\lambda^{\frac{1}{3}})}{p \ \tg(\alpha(\lambda^{\frac{1}{3}}), \alpha(\lambda^{\frac{1}{3}}))}, \quad
\overline{\pi}= \frac{(1-p) \tilde{\beta}( \alpha(\lambda^{\frac{1}{3}}))}{(1-p)\lambda \tilde{\beta}( \alpha(\lambda^{\frac{1}{3}})) + p(1-\lambda) \tg(\tilde{\beta}( \alpha(\lambda^{\frac{1}{3}})),\alpha(\lambda^{\frac{1}{3}}))}.
\end{split}
\end{equation}
Therefore, $\underline{\pi}$ and $\overline{\pi}$ are analytic functions of $\lambda^{\frac{1}{3}}$, and they admit the power series expansions \eqref{asymp pi}. Furthermore, we can calculate their coefficients by applying \eqref{g aij} and Remark \ref{alpha}.
\end{proof}

\medskip

To deal with the asymptotic analysis of the value function $u(\eta_B,\eta_S)$, we use the expression \eqref{expu}. In \eqref{expu}, once we find an expression of $\hat{x}$ in terms of $\lambda$, then $f(\hx)$ and $g(\hx)$ can be written by Proposition  \ref{lambda}. Note that the definition of $\hx$  in Proposition \ref{previous} clearly depends on the initial position $(\eta_B,\eta_S)$. 

\begin{theorem}\label{value expansion}{(Expansion of the value function)} Let $\lambda>0$ be small enough. Then, the value function $u(\eta_B,\eta_S)$ can be written as a power series of $\lambda^{\frac{1}{3}}$:
\begin{displaymath}
\begin{split}
u(\eta_S,\eta_B) =\sum_{i\geq0} \zeta_i (\lambda^{\frac{1}{3}})^i,
\end{split}
\end{displaymath}
where we can compute $\zeta_i$'s by reculsive calculations. For example, $\zeta_0, \zeta_1, \zeta_2$ and $\zeta_3$ are given by
\begin{displaymath}
\begin{split}
\zeta_0 &= \tfrac{\abs{y_N}^{1-p} (\eta_B + S_0 \eta_S)^p}{p}, \\
\zeta_1 &=0, \\
\zeta_2 &= -  \tfrac{\big(\tfrac{9}{128}\big)^{\frac{1}{3}} \sigma^2 \big( (1-p)\pi(\pi-1)\big)^{\frac{4}{3}}\abs{y_N}^{2-p} (\eta_B + S_0 \eta_S)^p}{1-p} , \\
\zeta_3 &=\left \{ \begin{array}{ll}
- \tfrac{\pi \abs{y_N}^{1-p}  (\eta_B + S_0 \eta_S)^p}{2},  &\textrm{if  } \frac{S_0 \eta_S}{\eta_B + S_0 \eta_S} =\pi, \\
- \tfrac{\pi \abs{y_N}^{1-p}  (\eta_B + S_0 \eta_S)^p}{2},  &\textrm{if  }  \tfrac{S_0 \eta_S}{\eta_B + S_0 \eta_S}<\pi, \\
 \tfrac{\big(\pi- \frac{S_0 \eta_S}{2(\eta_B + S_0 \eta_S)}\big)\abs{y_N}^{1-p}  (\eta_B + S_0 \eta_S)^p}{2},  &\textrm{if  } \frac{S_0 \eta_S}{\eta_B + S_0 \eta_S}>\pi. \\
\end{array} \right.
\end{split}
\end{displaymath}
\end{theorem}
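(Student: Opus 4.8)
The plan is to combine the explicit formula \eqref{expu} for the value function with the analyticity results of Section \ref{analytic}. Write $\omega = \lambda^{1/3}$ throughout. By Proposition \ref{lambda}, $\ux = \alpha(\omega)$, $\ox = \tilde\beta(\alpha(\omega))$, and $g(x) = \tg(x,\alpha(\omega))$, all analytic in $\omega$ near $0$. Since $\tg(x_N,x_N) = y_N \neq 0$ and analyticity persists under composition, $g(\ux)$, $g(\ox)$ and $g(\hx)$ are analytic in $\omega$ with nonzero value at $\omega = 0$, so $\abs{g(\hx)}^{1-p}$ is analytic in $\omega$. For the factor $\eta_B + \eta_S S_0 e^{f(\hx)}$, note $f(\hx) = \ln(1-\lambda) + \int_{\hx}^{\ox} \tfrac{g'(x)}{x}dx$; using the antiderivative $G$ from \eqref{defG} one has $f(\hx) = \ln(1-\lambda) + G(\ox,\alpha(\omega)) - G(\hx,\alpha(\omega))$, and $\ln(1-\lambda) = \ln(1-\omega^3)$ is analytic in $\omega$; hence $f(\hx)$ and then $e^{f(\hx)}$ are analytic in $\omega$, provided $\hx$ itself is analytic in $\omega$. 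So the core of the argument reduces to two things: establishing analyticity of $\omega \mapsto \hx(\omega)$, and then justifying the recursive computability of the Taylor coefficients.

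For the analyticity of $\hx$, I would treat the three cases in the definition of $\hx$ separately. In the case $\tfrac{S_0\eta_S}{\eta_B+S_0\eta_S} > \overline\pi$ we have $\hx = \ox = \tilde\beta(\alpha(\omega))$, manifestly analytic; but note $\overline\pi$ depends on $\omega$, so this case is in force only for those $\omega$ with $\overline\pi(\omega)$ below the fixed threshold — since $\overline\pi(0) = \pi$, by Theorem \ref{no-trading expansion} this is an open condition on $\omega$ and is stable as $\omega \to 0^+$ precisely when $\tfrac{S_0\eta_S}{\eta_B+S_0\eta_S} > \pi$. Symmetrically for $\hx = \ux$ when $\tfrac{S_0\eta_S}{\eta_B+S_0\eta_S} < \pi$. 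The genuinely interior case $\tfrac{S_0\eta_S}{\eta_B+S_0\eta_S} = \pi$ (where at $\lambda=0$ the no-trade wedge degenerates to the Merton line through $N$, so $\hx$ sits strictly between $\ux$ and $\ox$ for small $\omega>0$) is handled by the implicit function theorem: $\hx$ is defined by $\Phi(x,\omega) := \tfrac{(1-p)x}{p\,\tg(x,\alpha(\omega))} - \tfrac{S_0\eta_S e^{f(x,\omega)}}{\eta_B + S_0\eta_S e^{f(x,\omega)}} = 0$, where $f(x,\omega) = \ln(1-\omega^3) + G(\tilde\beta(\alpha(\omega)),\alpha(\omega)) - G(x,\alpha(\omega))$ is jointly analytic; at $\omega = 0$ we get $\Phi(x_N,0) = 0$, and the $x$-derivative at $(x_N,0)$ is nonzero — indeed from the sign computation in the proof of Proposition \ref{previous}(2), $\tfrac{d}{dx}\big(\tfrac{(1-p)x}{p\,g(x)}\big)\big\vert_{x_N} > 0$ while the second term is constant in $x$ at $\omega = 0$ — so the holomorphic implicit function theorem \cite[Theorem 3.1.4]{ODE2} yields a unique analytic branch $\hx(\omega)$ with $\hx(0) = x_N$.

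With analyticity in hand, $u(\eta_S,\eta_B) = \tfrac1p\big(\eta_B + \eta_S S_0 e^{f(\hx(\omega),\omega)}\big)^p \abs{\tg(\hx(\omega),\omega)}^{1-p}$ is analytic in $\omega$ at $0$ (a power-$p$ and a power-$(1-p)$ of nonvanishing analytic functions, composed), which gives the power-series expansion $u = \sum_{i\geq 0}\zeta_i\omega^i$. The coefficients are then computed recursively: the expansions of $\alpha(\omega)$, $\tilde\beta(\alpha(\omega))$, $\tg(\cdot,\alpha(\omega))$, $G$ and $F$ are all obtainable to any order by the substitution-and-matching scheme of Remark \ref{g z1z2} and Remark \ref{alpha}; feeding these into $\Phi(x,\omega) = 0$ and matching powers of $\omega$ determines the Taylor coefficients of $\hx(\omega)$ order by order (the leading nonzero $x$-derivative making the recursion solvable), and finally substituting into the formula for $u$ and expanding the powers $(\cdot)^p$, $(\cdot)^{1-p}$ via the binomial series produces the $\zeta_i$. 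The explicit values $\zeta_0 = \tfrac1p\abs{y_N}^{1-p}(\eta_B+S_0\eta_S)^p$, $\zeta_1 = 0$, $\zeta_2$, $\zeta_3$ follow by carrying this out to third order, using $\alpha'(0) = c_0^{-1/3}$, $\tilde\beta'(x_N) = -1$, and $\tg(x_N,x_N) = y_N$; the case split in $\zeta_3$ reflects exactly the case split in the definition of $\hx$ (when $\tfrac{S_0\eta_S}{\eta_B+S_0\eta_S} > \pi$ the relevant argument is $\ox$, etc.).

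The main obstacle is the interior case of $\hx$: one must verify that the defining equation $\Phi(x,\omega)=0$ has a solution in $[\ux,\ox]$ that varies analytically and does not collide with the (moving) endpoints as $\omega \to 0^+$ — i.e., that the branch selected by the implicit function theorem is genuinely the one in Proposition \ref{previous}. This needs the nondegeneracy $\partial_x\Phi(x_N,0) \neq 0$ together with the fact, inherited from Proposition \ref{previous}(2), that $\Phi(\cdot,\omega)$ is strictly monotone on $[\ux,\ox]$, so that the root is unique and interior for all small $\omega>0$; the boundary cases are comparatively routine once one tracks that the strict inequalities defining them are preserved under perturbation because $\underline\pi(\omega),\overline\pi(\omega) \to \pi$.
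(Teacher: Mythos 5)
Your proposal is correct and follows essentially the same route as the paper: the same three-way case split on $\hx$ driven by $\underline{\pi},\overline{\pi}\to\pi$, the holomorphic implicit function theorem applied to the defining equation of the interior $\hx$ (your $\Phi$ is the paper's $\tilde r$ up to clearing denominators), and then analyticity of the composed expression for $u$ plus the recursive coefficient scheme of Remarks \ref{g z1z2} and \ref{alpha}. The only nitpick is that $e^{f(x,\omega)}$ is not literally constant in $x$ at $\omega=0$; what you actually need (and what holds, since $\tfrac{\partial\tg}{\partial z_1}(x_N,x_N)=0$) is that its $x$-derivative vanishes at $x=x_N$, so $\partial_x\Phi(x_N,0)=\tfrac{1-p}{p\,y_N}\neq 0$.
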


\begin{proof}
In Theorem \ref{no-trading expansion}, we observe that as $\lambda \searrow 0$, $\underline{\pi}\nearrow \pi$ and $\overline{\pi}\searrow \pi$. 
Thus, from the definition of $\hx$ in Proposition \ref{previous}, for small enough $\lambda>0$,
\begin{enumerate}
\item if $\frac{S_0 \eta_S}{\eta_B + S_0 \eta_S}>\pi$, then $\hx=\ox$.
\item if $\frac{S_0 \eta_S}{\eta_B + S_0 \eta_S}<\pi$, then $\hx=\ux$.
\item if $\frac{S_0 \eta_S}{\eta_B + S_0 \eta_S}=\pi$, then $\hx$ is a solution of $\frac{(1-p)x}{p \, g(x)}=\frac{S_0 \eta_S \, e^{f(x)}}{\eta_B + S_0 \eta_S \, e^{f(x)}}$.
\end{enumerate}
Since (iii) is the only nontrivial case, now we consider the case $\frac{S_0 \eta_S}{\eta_B + S_0 \eta_S}=\pi$. We consider a holomorphic function $\tilde{r}$ on a complex-neighborhood of $(x_N,x_N)$ such that
$$\tilde{r}(z_1, z_2) = \Big(1-\frac{(1-p)z_1}{p \, \tg(z_1,z_2)}\Big) \Big(\eta_B + \eta_S S_0 \, e^{-G(z_1,z_2)+G(z_2,z_2)}\Big )- \eta_B,$$
where $G$ is as in \eqref{expG}. Then, $\tilde{r}(x_N,x_N)=0$ and $\tfrac{\partial \tilde{r}}{\partial z_1}(x_N,x_N)\neq 0$ follow from a direct calculation using $\tg(x_N,x_N)=y_N$ and $\tfrac{\partial\tg}{\partial z_1}(x_N,x_N)=0$. By the holomorphic version of the Implicit Function Theorem (\cite[Theorem 3.1.4]{ODE2}), there exists a holomorphic function $\tilde{x}$ defined on a complex-neighborhood of $x_N$ such that $\tilde{r}(\tilde{x}(z_2),z_2)=0$. Thus, by Proposition \ref{lambda} and the definition of $\hx$, we conclude that $\hx=\tilde{x}(\alpha(\lambda^{\frac{1}{3}}))$. 

Hence, $\hx$ is an analytic function of $\lambda^{\frac{1}{3}}$ for small enough $\lambda$, for all cases above ($\ux$ and $\ox$ are also analytic functions of $\lambda^{\frac{1}{3}}$, by Proposition \ref{lambda}). Now we can rewrite the expression \eqref{expu} as
\begin{displaymath}
\begin{split}
u(\eta_B,\eta_S)=
\left \{ \begin{array}{ll} \tfrac{1}{p}(\eta_B + S_0 \eta_S \, e^{G(\ux,\ux)-G(\hx,\ux)}  )^p \abs{\tg(\hx,\ux)}^{1-p}, & \frac{S_0 \eta_S}{\eta_B + S_0 \eta_S} =\pi, \\
\tfrac{1}{p}(\eta_B + S_0 \eta_S)^p \abs{\tg(\ux,\ux)}^{1-p}, & \tfrac{S_0 \eta_S}{\eta_B + S_0 \eta_S}<\pi, \\
\tfrac{1}{p}(\eta_B + (1-\lambda)S_0 \eta_S)^p \abs{\tg(\ox,\ux)}^{1-p}, & \frac{S_0 \eta_S}{\eta_B + S_0 \eta_S}>\pi.
\end{array} \right.
\end{split}
\end{displaymath}
Since $G$ and $\tg$ are holomorphic and $\ux ,\ox$ and $\hx$ are analytic functions of $\lambda^{\frac{1}{3}}$, we conclude that $u(\eta_B,\eta_S)$ is an analytic function of $\lambda^{\frac{1}{3}}$ and can be written as 
\begin{displaymath}
\begin{split}
u(\eta_S,\eta_B) =\sum_{i\geq0} \zeta_i (\lambda^{\frac{1}{3}})^i,
\end{split}
\end{displaymath}
for small enough $\lambda$. We can calculate the coefficients of the above power series as before.
\end{proof}

\begin{rem}
By the exactly same method used here, we can also get results of asymptotic analysis equivalent to Theorem \ref{no-trading expansion} and \ref{value expansion} for the case of the logarithmic utility under the assumption that $\pi\neq 1$. This extends work of \cite{GMS11}, which considered the case of the logarithmic utility under the assumption that $\pi<1$.
\end{rem}

\bibliographystyle{plain}  
\bibliography{Asymptotic_references}        

\appendix
\section{}
We summarize some results in \cite{CMZ12} which are used in this paper. In \cite{CMZ12}, they derive the HJB equation from the observation that the shadow process should be the minimizer of certain stochastic control problem. And the HJB equation can be transformed to the free boundary problem of the form \eqref{equ:HJB-g}. After showing that there exists a solution of \eqref{equ:HJB-g}, they construct the shadow price process and find an expression for the value function and the optimal strategy. We present this result in the following theorem, which is weaker than the original one \cite[Theorem 2.8]{CMZ12}, but simpler to present.

\begin{theorem}{(For $p\neq 0$ with $\underline{\lambda}=\lambda$ and $\overline{\lambda}=0$)}\\
\label{thm:main}
Assume that $\mu,\sigma\in
  (0,\infty)$ satisfy $2\sigma^2 \delta(1-p)-p\mu^2>0$. Then\\
(1) There exist constants $\ux,\ox$ with $0< \ux < \ox$ and a function $g\in C^2[\ux,\ox]$ such that
   \begin{enumerate}
  \item[a)] $g'(x)>0$ for $x\in (\ux,\ox)$, and $g$ satisfies the equation 
\begin{equation}
   \label{equ:HJB-g}
   \begin{split}
     \inf_{\Sigma,\theta\in \mathbb{R}} \Big( \tfrac{1}{2} \Sigma^2 \tfrac{x}{g'(x)} -
     \alpha_q(\Sigma,\theta) x - \beta(\theta) g(x) + \gamma(\theta)
	 \Big)=0,\ x\in (\ux,\ox),
   \end{split}
\end{equation}
where 
\begin{equation}%
    \label{equ:greeks}
    \begin{split}
& q=\tfrac{p}{1-p}, \ 
      \alpha_q(\Sigma,\theta) = \theta\sigma -\mu - \Sigma
      \Big(\tfrac{1}{2} \Sigma+\sigma-\theta(1+q) \Big),\\
      &\beta(\theta) =(1+q) \Big(\delta - \tfrac{1}{2} q  \theta^2\Big),\text{ and } 
      \gamma(\theta) =\sgn(p).
\end{split}
\end{equation}
  \item[b)] \label{ite:main-4} the following boundary/integral conditions
	are satisfied:
	\begin{equation}\label{equ:integral-cond}
	 \begin{split}
	   g'(\ux+)=g'(\ox-)=0\text{ and }
	\int_{\ux}^{\ox} \tfrac{g'(x)}{x}\,
	dx=\log(\tfrac{1}{1-\lambda}).
	 \end{split}
	\end{equation}
  \item[c)] The function $h:[\ux,\ox]\to\mathbb{R}$, defined by 
\begin{equation}
     h(x)= q g(x) \left(g'(x)+1\right)-(q+1) x g'(x)
\end{equation}
admits no zeros on $[\ux,\ox]$. 
\end{enumerate}
(2) For any $(\eta _B, \eta _S)$ such that $\eta_B + (\eta_S)^+ S_0 - (\eta_S)^- (1-\lambda)S_0 > 0$, 
there exists a shadow price $\tilde{S}$, of the form $\tilde{S}_t = S_t e^{f(X_t)}$, where
  \begin{itemize}
	\item[-] $f(x)=\uy + \int_x^{\ox} \tfrac{g'(t)}{t}\, dt$, for
	  $x\in [\ux,\ox]$, 
	\item[-] $\hx= \left\{ \begin{array}{ll} \ox, & \textrm{if  }\ \frac{S_0 \eta_S}{\eta_B + S_0 \eta_S}>\overline{\pi}, \\ \ux, &\textrm{if  }\  \tfrac{S_0 \eta_S}{\eta_B + S_0 \eta_S}<\underline{\pi}, \\ \textrm{the solution to $\frac{(1-p)x}{p \, g(x)}=\frac{S_0 \eta_S \, e^{f(x)}}{\eta_B + S_0 \eta_S \, e^{f(x)}}$}, &\textrm{otherwise}, \end{array} \right.$
	\item[-]  $\{X_t\}_{t\in[0,\infty)}$ is the unique
		solution of reflected SDE 
\begin{displaymath}
\label{equ:Skorokhod}
\left\{   \begin{split}
     dX_t &= \Big( 
X_t \beta(\hat{\theta}(X_t))
+q \hat{\theta}(X_t) \hat{\Sigma}(X_t) \tfrac{X_t}{g'(X_t)}\Big) \, dt - \hat{\Sigma}(X_t) \tfrac{X_t}{g'(X_t)}\,   dB_t+ d\Phi_t,\\
X_0&=\hx.
   \end{split}
\right.
\end{displaymath}
where $\Phi$ is the instantaneous inward reflection term for the boundary $\{\ux,\ox\}$ and
\begin{displaymath}
\hat{\theta}(x)=-\tfrac{\sigma(1-p)x(qg'(x)-1)}{h(x)}, \,\, \hat{\Sigma}(x)=-\tfrac{\sigma(qg(x)-x)g'(x)}{h(x)}.
\end{displaymath}
  \end{itemize}
(3) The value $u(\eta_B,\eta_S)$ and an optimal trading strategy
 $(\hvpz,\hvp)$ of \eqref{primal} satisfy
$$u(\eta_B,\eta_S)= \tfrac{1}{p} \big(\eta_B + \eta_S S_0  e^{f(\hx)}\big)^p \vert g(\hx)\vert^{1-p}, \textrm{  and  } \tfrac{\hvp_t \tilde{S}_t}{\hvpz_t+\hvp_t \tilde{S}_t}=\tfrac{X_t}{q\, g(X_t)}. $$
\end{theorem}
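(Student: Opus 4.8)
The plan is to follow the shadow-price methodology that underlies \cite[Theorem 2.8]{CMZ12}: reduce the market with proportional transaction costs to a frictionless market driven by a to-be-constructed shadow price, solve the associated free-boundary problem, and close the argument by a sandwich inequality. Since the present statement is a simplified ($\underline\lambda=\lambda$, $\overline\lambda=0$, $\lambda$ small) version of \cite[Theorem 2.8]{CMZ12}, only the structure need be reproduced; smallness of $\lambda$ keeps the free-boundary solution in a neighbourhood of the Merton point $N$ and so lets one bypass the boundary-singularity case analysis of \cite[Section 6]{CMZ12}. The candidate arises as follows: in a frictionless market with price $\tilde S_t=S_te^{f(X_t)}$, with $X$ a scalar process instantaneously reflected on $[\ux,\ox]$, one looks for a value function that is $p$-homogeneous in total wealth, of the form $\tfrac1p(\eta_B+\eta_S\tilde S_0)^p\abs{g(X_0)}^{1-p}$. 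Inserting this ansatz into the frictionless HJB equation and removing the wealth variable by homogeneity produces exactly \eqref{equ:HJB-g}, whose infimum over the auxiliary parameters $(\Sigma,\theta)$ — encoding the local dynamics of $\tilde S$ and the choice of market price of risk — once performed with the minimizers $\hat\Sigma,\hat\theta$ substituted in, collapses to the scalar ODE $g'(x)=L(x,g(x))$ of \eqref{free boundary ODE}--\eqref{L and T}, while $f$, $h$, and the dynamics of $X$ are read off from $\hat\Sigma,\hat\theta$.

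For part (1) I would solve \eqref{equ:HJB-g}--\eqref{equ:integral-cond} by the shooting procedure of Remark \ref{g alpha}: for $\alpha\in(0,x_N)$ solve $\ga'=L(x,\ga)$ with $\ga(\alpha)=T(\alpha)$; the algebraic identity $L(x,T(x))\equiv0$ (cf.\ \eqref{g zero}) forces $\ga'(\alpha)=0$, so the left-boundary condition holds automatically, and $\ba:=\inf\{x>\alpha:\ga'(x)=0\}$ supplies the right-boundary condition. One then checks, by a phase-plane analysis of $g'=L(x,g)$, that $\ba<\infty$, that $h$ (which enters the denominators of $\hat\theta,\hat\Sigma$) has no zeros on $[\alpha,\ba]$ — item (1c) — and that $\ga'>0$ on $(\alpha,\ba)$; smoothness of $g$ is then immediate from the ODE and the analyticity of $L$. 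Finally $\alpha\mapsto\int_\alpha^{\ba}\ga'(x)/x\,dx$ is continuous, strictly monotone, and vanishes as $\alpha\uparrow x_N$, so for small $\lambda>0$ there is a unique $\alpha$ with $\int_\alpha^{\ba}\ga'(x)/x\,dx=\log\tfrac1{1-\lambda}$; set $(\ux,\ox,g)=(\alpha,\ba,\ga)$.

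With $g$ in hand, for parts (2)--(3) put $f(x)=\uy+\int_x^{\ox}g'(t)/t\,dt$, define $\hat\theta,\hat\Sigma,h$ as in the statement, and solve the reflected SDE for $X$ on $[\ux,\ox]$; this Skorokhod problem is well posed since the coefficients are Lipschitz on the compact interval and $g'>0$ in the interior, the degeneracy $g'(\ux)=g'(\ox)=0$ being absorbed by the reflection term $\Phi$. Then $\tilde S_t=S_te^{f(X_t)}$ satisfies $(1-\lambda)S_t\le\tilde S_t\le S_t$ because $f$ is monotone on $[\ux,\ox]$ with range of length $\log\tfrac1{1-\lambda}$ by the integral condition, and $\hat\theta(X)$ yields an equivalent local-martingale deflator for $\tilde S$. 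Setting $w(\eta_B,\eta_S)=\tfrac1p(\eta_B+\eta_S\tilde S_0)^p\abs{g(X_0)}^{1-p}$ with $X_0=\hx$, an Itô expansion of $e^{-\delta t}w(\cdot)+\int_0^te^{-\delta s}U(c_s)\,ds$ using \eqref{equ:HJB-g} shows this process is a supermartingale along every strategy admissible in the frictionless $\tilde S$-market and a uniformly integrable martingale along the feedback strategy with risky fraction $X_t/(q\,g(X_t))$ — here $g'(\ux)=g'(\ox)=0$ is precisely what makes the $d\Phi$-contribution drop out. Hence the frictionless $\tilde S$-value equals $w$; the sandwich then closes the proof: $\tilde S$ lying between bid and ask makes every transaction-cost-admissible strategy admissible, with weakly larger wealth, in the $\tilde S$-market, so $u\le w$, while the optimal $\tilde S$-strategy rebalances only when $X\in\{\ux,\ox\}$, i.e.\ only when $\tilde S$ equals the bid or the ask price, so it incurs no transaction cost and attains $w$, giving $u\ge w$.

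The hard part will be the verification step: proving the supermartingale property uniformly over all admissible competitors, establishing the transversality and uniform-integrability estimates needed to promote the candidate to a genuine (not merely super-) martingale, and checking that the feedback strategy is itself admissible. A second delicate point is making the shooting argument fully rigorous — global existence $\ba<\infty$, strict monotonicity of $\alpha\mapsto\int_\alpha^{\ba}\ga'/x\,dx$, and avoidance of the singular locus of $L$ — which for small $\lambda$ is eased by the confinement of $(\ux,\ox,g)$ near $N$. All of this is carried out in detail in \cite{CMZ12}, of which the present statement is a specialization.
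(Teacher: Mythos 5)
The paper does not give its own proof of Theorem~\ref{thm:main}; it is stated in the appendix as a simplified version of \cite[Theorem 2.8]{CMZ12}, accompanied only by a short prose summary of the shadow-price methodology (construct a shadow price, reduce to the free-boundary problem \eqref{equ:HJB-g}, verify optimality). Your outline follows exactly that methodology --- homogeneous ansatz $\tfrac1p(\eta_B+\eta_S\tilde S_0)^p\abs{g(X_0)}^{1-p}$, minimization over $(\Sigma,\theta)$ yielding $g'=L(x,g)$, shooting in the sense of Remark~\ref{g alpha}, and a sandwich/verification argument --- and correctly defers the substantive work to \cite{CMZ12}, so it is consistent with how the paper treats this result.
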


\end{document}